\documentclass[8.5pt]{article}
\usepackage{setspace}
\usepackage{multicol}
\usepackage{amsmath}
\usepackage{amssymb}
\usepackage{latexsym} 
\usepackage[all]{xypic}
\usepackage{listings}
\usepackage{amsmath,amsfonts,amssymb}
\usepackage{mathrsfs}
\usepackage{amsmath}
\usepackage{amssymb}
\usepackage{enumerate}
\usepackage{eufrak}
\usepackage{fancyhdr}
\usepackage[dvips,breaklinks=true]{hyperref}
\usepackage[all]{xypic}
\usepackage{mytheorems}

 \oddsidemargin 0.3in
 \evensidemargin 0.0in 
\topmargin -0.5in
\textwidth  6.2in 
\textheight 9.0in

  \title{The Star Height Hierarchy Vs. The Variable Hierarchy}
  \author{Walid Belkhir \\
\begin{small} Laboratoire d'Informatique Fondamentale de Marseille, France\end{small}}
   \date{}
\input macros.tex
\doublespacing

\begin{document}
\maketitle


\begin{abstract} The star height hierarchy (resp. the variable hierarchy) results in classifying $\mu$-terms into classes according to the nested depth of fixed point operators (resp. to the number of bound variables).   We prove, under some assumptions, that   the variable hierarchy is a proper refinement of the star height hierarchy. We mean that the non collapse of the variable hierarchy implies the non collapse of the star height hierarchy. The proof relies on the combinatorial  characterization of the two hierarchies.
\end{abstract}
\textbf{Keywords.} $\mu$-calculi, hierarchies,  games, strategies, combinatorial problems. \\

 \section{Introduction}
Roughly speaking,  a  $\mu$-calculus in the abstract sense of Arnold and  Niwi{\'n}ski  \cite[\S 2]{AN} is a set of syntactic entities and a set of formal operations. The latter consists in the  fixed point operators $\mu$ and $\nu$ and the substitution operation.  These syntactic entities come with and intended interpretation over a class of complete lattices. Each entity $t$ is interpreted as a monotonic mapping from $t^{ar(t)}$ to $L$, where $ar(t)$ is the arity of $t$, that corresponds to  the  free variables of  $t$, and $L$ is a complete lattice. The entities $\mu x .t$ and $\nu x. t$  of a $\mu$-calculus are interpreted respectively as the least and greatest parametrized fixed point  of the interpretation of $t$. The substitution is interpreted by means of the functional  composition.

Hierarchies and logical expressiveness are at the core of fixed point theory.   The most  known  and well studied measure of the complexity  of the  $\mu$-calculi is the \emph{alternation depth} of its $\mu$-terms \cite{clones,Bradfield98,ANhier}, that is, the number of alternations between $\mu$ and $\nu$.  As a consequence,  it is possible  to construct a  hierarchy  of  $\mu$-terms  according to the alternation depth measure.  \longversion{The first level of this hierarchy contains  for example PDL, CTL \dots We mean that one can encode PDL and CTL formulae with modal $\mu$-formulae of alternation depth one.  This is not the case of  Parikh's Game Logic GL \cite{Parikh}, since  GL intersects with all the classes of the hierarchy \cite{BerGamelogic03}. As a consequence,  it is not possible to use the alternation depth hierarchy to prove  that   GL  is less expressive than  $L_{\mu}$.  Another option was pursed in \cite{BerwangerGraLen06} by encoding  GL formulae with $\mu$-formulae of just two fixed point variables.  This encoding gave rise to  an orthogonal hierarchy, that is the hierarchy of $L_{\mu}$ formulae induced by the number of fixed point variables.  The non collapse of the variable hierarchy \cite{BerwangerGraLen06} allowed to  separate GL from $L_{\mu}$. \\}
The \emph{variable hierarchy} \cite{BerwangerGraLen06,LPAR08} results in classifying  $\mu$-terms  into classes according to the number of bound variables i.e. fixed point variables. \longversion{Historically, the variable hierarchy problem was asked  by Immerman and Poizat \cite{Immerman} in the context of  back and forth games $\homl A, B   \homr$.  The aim was of deciding whether   a logical formulae of fixed number of   bound variables is  able or not to distinguish the two structures $A$ and $B$. Or, in other words,  they were  asking whether the two structures $A$ and $B$ are models of the same logic formulae where the number of bound variables is bounded.}  By introducing the variable hierarchy for the propositional modal $\mu$-calculus \cite{kozen}  and showing that it does not collapse \cite{BerwangerGraLen06}, the authors managed to separate  Parikh's Game Logic  \cite{Parikh} from  the modal $\mu$-calculus and solve a long standing open problem.      
   
The third hierarchy is the \emph{star height hierarchy} \cite{EgganTransitionGraphs,CourcelleStarHieight}: the $\mu$-terms are classified into levels of a hierarchy according to the nested depth of the application of  fixed point operators (or the iteration operator).   The star height problem was first  asked in formal language theory and consists in answering  whether all regular languages can be expressed using regular expressions of bounded  star height. This question have been answered by Eggan in \cite{EgganTransitionGraphs}, where he gave examples of regular languages of star height $n$ for every $n \in \mathbb{N}$.  
The  star height problem  was asked later for  regular trees \cite{CourcelleStarHieight}, the latter  are finite or infinite trees with only finite many distinct subtrees, up to isomorphism. Regular  trees form the free iteration  theory and they might be  written by means of iterative   theory expressions. These expressions use an iteration operator, denoted  $\dag$,  which is  interdefinable   with the      Kleene's $*$  operator  in the case of matrix iteration theories \cite[\S 9]{bloomesik}.  The star height problem can be asked in a general way for iteration theories \cite{bloomesik}, where the \emph{dagger} $\dag$ operator is considered.\cutout {as well as for $\mu$-calculi, if  just one fixed operator among $\set{\mu,\nu}$ is considered.}

The alternation depth hierarchy and the variable hierarchy are orthogonal, however the variable hierarchy and the star height hierarchy are intuitively close. In this paper we show that the variable hierarchy is a proper \emph{refinement} of the star height hierarchy. That is, the non collapse of the former implies the non collapse of the latter.     The key observation is that the   combinatorial measure which   characterizes  the variable hierarchy  (i.e. the entanglement) is lower than the combinatorial measure  which   characterizes the star height hierarchy (i.e. the rank).

\longversion{ This chapter is devoted to compare, under some assumptions,  the star height hierarchy and the variables hierarchy.  We shall discuss at the end of this Chapter how the strictness of the variable hierarchy would be of help to deduce the strictness of the star height hierarchy.   The key Lemma is that the   combinatorial measure  characterizing the variable hierarchy  (the entanglement) is lower than the combinatorial measure  characterizing the star height hierarchy (the rank). }

\paragraph{Preliminaries and  Notations} ~\\
A digraph $G=(V_G,E_G)$ is a set  of vertices $V_G$ and a binary relation $E_G \subseteq V_G \times V_G$.  $G$ is \emph{strongly connected} if for each two vertices $v_1,v_2 \in V_G$ there  exists a path in $G$ from  $v_1$ to $v_2$. A \emph{strongly connected component}  of $G$ is a maximal strongly connected subgraph of $G$.  We shall write $scc(v)$ for the strongly connected component of  $G$ that contains the vertex $v$. A strongly connected component is \emph{trivial} if it reduces to a single vertex without loops.  
We shall write   $\scc{G}$ for  the set of the non trivial strongly connected components of $G$. We define a transitive relation  $\prec_G$ on $\scc{G}$ as follows: $G_1\prec_G G_2$  if and only if $G_1\neq G_2$ and  there is a path in $G$ from a vertex of $G_1$ to a vertex of  $G_2$.  If $G_1\in \scc{G}$ then let  $G_{1}^{\succ}=\set{G_2 \in \scc{G}  \;|\; G_1 \prec_G G_2  }$. If no confusion will arise then we shall write $\prec$ instead of $\prec_G$.

\section{The $\mu$-calculi: syntax and semantics}\label{mu:calc:AN:sec}
\longversion{Many syntactic entities\marginpar{Who cares?} can be structured to have a shape of a $\mu$-calculus. For instance, this happened to infinite words \cite[\S 5]{AN},  automata \cite[\S 7]{AN}, and  parity games \cite{TCS333}. The interpretation of an automaton, viewed as an entity of the $\mu$-calculus, is the language which it accepts.} We recall the definition of a $\mu$-calculus as given in  \cite[\S 2]{AN}. Let $E$ be a set of objects or entities and  let $Var$ be a fixed countable set of variables. The variables in $Var$ will be denoted by $x,y,z, \dots$ A mapping $\rho: Var \to E$  is called a \emph{substitution}. If $\rho$ is a substitution into some set $E$, $x$ a variable, and $e$ and element of $E$, we denote by $\rho\set{e/x}$ the substitution $\rho'$  defined by $\rho'(x)=e$ and $\rho'(y)=\rho(y)$ if $y \neq x$. More generally, if $x_1,\dots,x_n$ are \emph{distinct} variables and if $e_1,\dots,e_n$ are elements of $E$, then
$
\rho\set{e_1/x_1,\dots,e_n/x_n}
$ 
is the substitution $\rho'$ defined by
\begin{small}
\begin{align*}
\rho'(y)=\left\{ \begin{array}{ll}
e_i \;\; & \tif y \in \set{x_1,\dots,x_n}, \\
\rho(y) & \tif y \notin \set{x_1,\dots,x_n}
\end{array}
\right.
\end{align*}  
\end{small}
 \begin{definition}\label{Mu-Calc-Def}
\emph{A $\mu$-calculus is a tuple $\homl T,id,ar,comp,\mu,\nu \homr$, where  
\begin{itemize}
\item $T$ is an arbitrary set, its elements are the \emph{$\mu$-terms} of the $\mu$-calculus.
\item $id$ is a mapping from $Var$ to $T$. We denote by $\hat{x}$ the element $id(x)$ in $T$.
\item $ar$ is a mapping associating to each $t\in T$ a subset of $Var$ called the \emph{arity} of $t$. If $x\in ar(t)$, we say that $x$ occurs \emph{free} in $t$, and the elements of $ar(t)$ are called the \emph{free variables} of $t$.
\item $comp$ is a mapping associating a $\mu$-term $comp(t,\rho)$ with any $\mu$-term $t$ and any substitution $\rho$; we shall write also $t[\rho]$.
\item $\mu$ and $\nu$ are two mappings from $Var\times T$ to $T$, the value of the mapping $\theta$ on $x$ and $t$ is written $\theta x.t$, for $\theta=\mu,\nu$. 
\end{itemize}
\longversion{Moreover, we assume the following axioms:
\begin{enumerate}
\item $ar(\hat{x})=\set{x}$,
\item $ar(t[\rho])= ar'(t,\rho)$, where  $ar'(t,\rho)=\bigcup_{y \in ar(t)} ar(\rho(y))$,
\item $ar(\theta x . t)=ar(t) \setminus \set{x}$,
\item $\hat{x}[\rho]=\rho(x)$, for $x \in Var$,
\item $t[\rho]= t[\rho']$ if $\rho | ar(t)= \rho' | ar(t)$,
\item $(t[\rho])[\pi]= t[\rho \star \pi]$, where $\rho \star \pi$ is the substitution defined by $\rho \star \pi(x)=\rho(x)[\pi]$,
\item if $ar'(\theta x . t,\rho) \neq Var$, there exists a variable $y\notin ar'(\theta. t,\rho)$ (possibly equal to $x$), such that $(\theta x . t)[\rho] = \theta y (t[ \rho\set{\hat{y} \backslash x}] )$.
\end{enumerate}
}
}
\end{definition}
Moreover, a $\mu$-calculus should satisfy further axioms, see \cite[\S 2]{AN}.
\paragraph{Semantics} Let $\homl T,id,ar,comp,\mu,\nu \homr$ be a $\mu$-calculus.  A $\mu$-\emph{interpretation} of    $T$ is a pair $(L,I)$ where $L$ is a complete lattice and $I$ is a function that associates to each  $\mu$-term $t$ a monotonic mapping $t: L^{ar(t)}\to L$  such that the substitution is interpreted as the functional composition and $\mu x. t$ (resp. $\nu x .t$) is interpreted as the least (resp. greatest) parametrized fixed point of the interpretation of $t$.   

Many syntactic entities can be structured to have a shape of a $\mu$-calculus. For instance, this happened to infinite words \cite[\S 5]{AN},  automata \cite[\S 7]{AN}, and  parity games \cite{TCS333}. The interpretation of an automaton, viewed as an entity of the $\mu$-calculus, is the language which it accepts.

\section{The star height  hierarchy}
\longversion{ In this \marginpar{Mieu expliquer rank/starheight} section we focus on the combinatorial part of the star height: the solution of the star height problem  requires   a digraph complexity measure called the \emph{rank}. The variable hierarchy problem  requires also for another digraph complexity measure, that will be discussed in Section \ref{Vari-Hierarchy-Section}. The comparison between the two hierarchies will be given in terms of  the comparison of their related digraph measures. } 
   
\begin{definition}
\emph{Let $t$ be a $\mu$-term and $Comp_0$ be the set if $\mu$-terms without application of fixed point operators $\mu$ and $\nu$. The star height of $t$ is defined as follows:
\begin{small}
\begin{align*}
h(t)=\left\{ \begin{array}{ll}
0 &\tif t\in Comp_0\\
Max\set{h(t'),h(\rho(x_1)),\dots,h(\rho(x_n)) } & \tif t=comp(t',\rho) \textrm{ where } x_i \in ar(t')\\
1+ h(t') & \tif t=\theta x. t' \textrm{ where } \theta=\mu,\nu
 \end{array}
\right.
\end{align*}  
\end{small}}
\end{definition}

\paragraph{The \emph{rank} : a digraph measure  for the star height} ~\\
In \cite{EgganTransitionGraphs} Eggan defined a complexity measure of digraphs, called the \emph{feed back number},  that captures the minimal  star height of regular languages. The minimal star height of a   regular language is exactly the feed back number of the minimal digraph of the expressions defining this  language. This measure has been formulated in a more natural way by Courcelle et al. in \cite{CourcelleStarHieight}, and they rename  it the \emph{rank}. There, they  solved the star height problem for regular trees, and showed that the minimal star height of a regular tree is exactly the rank of the minimal digraph of the tree.  
\begin{definition}
\emph{The  \emph{rank}  of a digraph $G$ is defined as follows:\\
$\bullet$ if $\scc{G}=\emptyset$, then $\rnk{G}=0$,\\
$\bullet$ if $\scc{G}=\set{G},$ then $\rnk{G}=1+ Min\set{r(G\setminus v) \;|\; v \in V_G)}$,\\
$\bullet$ otherwise, $\rnk{G}=Max\set{r(G')\;|\; G' \in \scc{G}}$.
}
\end{definition}
\noindent Note that $r(G)=0$ if $G$ is acyclic. If  $G$ is strongly connected and $r(G)=1$   then  $G$ contains a vertex whose removal makes the digraph acyclic.  It is not hard to  argue that  the rank of an  undirected path on  $n$ vertices   is $\lfloor log(n) \rfloor$.  
\paragraph{Thief and Cops games for the rank} ~\\
To establish the relation  between the rank and the entanglement,  in a first step, we  rephrase the definition of the rank in terms of games and strategies in the most direct way. 

\begin{definition}\label{Rank-Game-Def}
\emph{
The rank game $\Grnk{G,k}, k\ge 0$ played  alternatively between a  Thief and Cops  on the digraph $G$ is defined as follows.
\begin{itemize}
\item Its positions are of the form $(G',P,n)$ where $0\le n\le k$,  $G'$ is a subgraph of $G$, and  $P\in \set{Thief,Cops}$ such that 
\begin{itemize}
\item the starting position is  $(G,Thief,k)$,
\item  if $\scc{G}=\emptyset$ or $n=0$ then the play halts.  
\end{itemize}
\item If $\scc{G}= \set{G_1,\dots,G_l}$,  (possibly $l=1$) then Thief chooses some $G_i$ and  moves from $(G,Thief,n)$ to $(G_i,Cops,n)$.
\item If the position is $(G,Cops,n)$ then\footnote{Observe that in this case $G$ is strongly connected.}  Cops choose $v \in V_G$ and move    to  $(G\setminus v, Thief, n-1)$,
\item Thief  wins a play if and only if its   final\footnote{Observe that there is no infinite play.} position $(G',P,n)$ is such that $\scc{G}\neq \emptyset$ and $n=0$.
\end{itemize}
We define  $\Grnk{G}$ to be the minimum $k$ such that  Cops have a winning strategy  in the rank game $\Grnk{G,k}$.
}
\end{definition}

\begin{proposition}\label{Game1forRank:Lemma}
Let $G$ be  a digraph, then  $\Grnk{G}$ equals $\rnk{G}$.
\end{proposition}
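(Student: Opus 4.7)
The plan is to prove the sharper biconditional: for every $k \ge 0$, Cops have a winning strategy in $\Grnk{G,k}$ if and only if $\rnk{G} \le k$. The proposition follows immediately by taking the minimal such $k$. The proof proceeds by induction on $|V_G|$, following exactly the three clauses of the recursive definition of $\rnk{G}$.

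For the base case $\scc{G}=\emptyset$, the play halts at the initial position $(G,Thief,k)$ and the winning condition for Thief fails, so Cops win for every $k\ge 0$; on the other hand $\rnk{G}=0\le k$. In the strongly connected case $\scc{G}=\set{G}$, if $k=0$ the play halts at $(G,Thief,0)$ and Thief wins because $\scc{G}\neq\emptyset$, matching $\rnk{G}\ge 1$. For $k\ge 1$, Thief's unique move leads to $(G,Cops,k)$, Cops then choose $v\in V_G$ reaching $(G\setminus v,Thief,k-1)$, and the remaining subgame is precisely $\Grnk{G\setminus v,k-1}$. Since $|V_{G\setminus v}|<|V_G|$, the induction hypothesis gives that Cops win that subgame iff $\rnk{G\setminus v}\le k-1$, so Cops win $\Grnk{G,k}$ iff there exists $v$ with $\rnk{G\setminus v}\le k-1$, i.e.\ iff $\rnk{G}=1+\min_{v}\rnk{G\setminus v}\le k$.

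In the reducible case $\scc{G}=\set{G_1,\dots,G_l}$ with $l\ge 2$, Thief moves from $(G,Thief,k)$ to some $(G_i,Cops,k)$, picking the SCC he expects to be hardest for Cops. The key observation is that a non-trivial SCC $G_i$ is itself strongly connected, so $\scc{G_i}=\set{G_i}$ and in the game $\Grnk{G_i,k}$ Thief's first move is forced to $(G_i,Cops,k)$; hence winning from $(G_i,Cops,k)$ is equivalent to winning $\Grnk{G_i,k}$. Because $l\ge 2$ we have $|V_{G_i}|<|V_G|$, so by induction Cops win from $(G_i,Cops,k)$ iff $\rnk{G_i}\le k$. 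Cops win $\Grnk{G,k}$ iff they have an answer for every Thief choice, i.e.\ iff $\max_i \rnk{G_i}=\rnk{G}\le k$, as required.

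The only real subtlety is the plumbing between the two formats: positions of the form $(G',Cops,n)$ arising inside a play are not themselves starting positions of any $\Grnk{\cdot,\cdot}$, but they collapse to starting positions thanks to the forced first move on a strongly connected graph, so the induction hypothesis may be invoked at them. Once this is spelled out, the three-way case split mirrors the recursive definition of $\rnk{G}$ and the induction essentially writes itself; I expect the main care to go into getting the boundary conditions at $k=0$ and the halt clause right, not into any deep combinatorial argument.
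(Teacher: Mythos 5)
Your proof is correct and is simply the careful spelling-out of what the paper's one-line proof asserts, namely that the game is a direct game-theoretic rephrasing of the recursive definition of the rank (Cops as minimizer, Thief as maximizer); the induction on $\card{V_G}$ with the case split mirroring the definition, and the observation that a Cops-position $(G',Cops,n)$ on a strongly connected $G'$ collapses to the start of $\Grnk{G',n}$ via the forced first move, is exactly the intended argument. One small repair: your trichotomy omits the case $\scc{G}=\set{G_1}$ with $G_1\neq G$ (a single non-trivial strongly connected component that is a proper subgraph of $G$), which falls under the \emph{otherwise} clause of the definition of $\rnk{G}$; your argument for $l\ge 2$ applies verbatim there, since the only use of $l\ge 2$ is to guarantee $\card{V_{G_i}}<\card{V_G}$, and that still holds when $l=1$ but $G_1\subsetneq G$.
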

\begin{proof}
We have just rephrased the definition of the rank by means of games and strategies following the game theoretic tradition.  That is,   Cops play the role of the \emph{minimizer}  and Thief plays the role of the \emph{maximizer}. 
\end{proof}

Now, in order to compare the rank with the entanglement in an easy way,  we shall give a useful variant of the   rank games. The idea is that, whenever $\scc{G}=\set{G_1,\dots, G_l}$ and Thief moves from $(G,Thief,n)$ to $(G_i,Cops,n)$, for some $i\in \set{1,\dots,l}$, then he is allowed later, and at any moment,  to \emph{come back} and  move  to $(G_j,Cops,n)$ where $G_i,G_j \in \scc{G}$ and $G_i \prec G_j$.

\begin{definition}\label{Rank2-Game-Def}
 Let $G$ be a digraph and $k\ge 0$. We define the rank game with \emph{come back}  $\grnk{G,k}$ between    Thief and  Cops   on the digraph $G$  as follows:\\
 Its positions are of the form $(G',P,L,n)$ where $0 \le n\le k$,  $G'$ is a subgraph of $G$,   $P\in \set{Thief,Cops}$, and $L$ is a set  of quadruplet of the form $(G,P,L,n)$  such that 
\begin{itemize}
\item the starting position is  $(G,Thief,\emptyset,k)$,
\item  if  ($\scc{G}=\emptyset$  or   $n=0$) and $L=\emptyset$  then the play halts.  
\end{itemize}
If $\scc{G}=\set{G_1,\dots,G_l}$,  (possibly $l=1$) then Thief has two kinds of moves: 
\begin{itemize}
\item he  chooses some $G_i\in \scc{G}$ and  moves from
\begin{align*}
(G,Thief,L,n) \to  (G_i,Cops, G_i^{\succ}\times (Cops,L,n) \cup L,n) \tag{forward move}
\end{align*}
where $\set{G_1,\dots,G_l} \times (Cops,L,n)=_{def}\set{(G_1,Cops,L,n),\dots,(G_l,Cops,L,n)}$.
\item  or he moves from 
\begin{align*} 
(G',Thief,L,n) \to B \textrm{ where } B\in L  \hspace{2cm} \tag{come back move}
\end{align*}
\end{itemize}
 If the position is $(G,Cops,n)$ then\footnote{Observe that  in this case $G$ is strongly connected.}  Cops choose $v \in V_G$ and move   to  $(G\setminus v, Thief, n-1)$,
\item Thief  wins a play if and only if its   final  position $(G',P,n)$ is such that $\scc{G}\neq \emptyset$ and $n=0$.

We define  $\grnk{G}$ to be the minimum $k$ such that  Cops have a winning strategy  in the rank game $\grnk{G,k}$.
\end{definition}

\begin{fact}\label{fact:finite}
There is no infinite play in the game  $\grnk{G,k}$.
\end{fact}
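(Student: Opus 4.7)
The plan is to equip each position with an ordinal-valued measure that strictly decreases along every legal move; well-foundedness of ordinals then rules out infinite plays. For a position $\pi=(G',P,L,n)$ reachable in the game, I would define recursively
\[
\rho(\pi)\;=\;\omega^{|V_{G'}|}\cdot (n+1)\;\oplus\;\bigoplus_{B\in L}\rho(B),
\]
where $\oplus$ is the Hessenberg (natural) sum. The recursion is well-founded since every $B\in L$ was placed there by a strictly earlier forward move, so the $L$-structure has bounded depth. To handle the single move type that could leave $\rho$ unchanged, namely a forward move on a strongly connected $G'$, I augment $\rho$ with a turn bit $\tau(\pi)=1$ if $P=Thief$ and $0$ otherwise, and order positions lexicographically by $\psi(\pi)=(\rho(\pi),\tau(\pi))$.

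The verification proceeds case by case. For a Cops move $(G,Cops,L,n)\to(G\setminus v,Thief,L,n-1)$, the leading summand $\omega^{|V_G|}\cdot (n+1)$ is replaced by $\omega^{|V_G|-1}\cdot n$ while $\bigoplus_{B\in L}\rho(B)$ is unchanged, so $\rho$ strictly drops. For a Thief forward move to $G_i$ with $G_i\subsetneq G$, every $G_j\in\set{G_i}\cup G_i^{\succ}$ is a proper sub-SCC and thus satisfies $|V_{G_j}|<|V_G|$; all new summands contributed to $\rho$ therefore have leading $\omega$-exponent strictly less than $|V_G|$, whereas the old $\rho$ contains the summand $\omega^{|V_G|}\cdot (n+1)$. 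A quick inductive argument shows that $\bigoplus_{B\in L}\rho(B)<\omega^{|V_G|}$ is maintained throughout the play, so the old $\rho$ genuinely dominates the new. In the remaining forward case, where $G$ is strongly connected, $G_i=G$ and $G_i^{\succ}=\emptyset$, so $L$ and $\rho$ are preserved while $\tau$ drops from $1$ to $0$. Finally, a come-back $(G',Thief,L,n)\to B\in L$ gives new value $\rho(B)$, a single summand of $\bigoplus_{B'\in L}\rho(B')$, which is strictly dominated by $\omega^{|V_{G'}|}\cdot (n+1)\oplus\bigoplus_{B'\in L}\rho(B')=\rho(\pi)$.

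The delicate point, and what motivates the choice of measure, is the come-back: each forward descent contributes a positive top term $\omega^{|V|}\cdot (n+1)$ that a later come-back discards, leaving behind only a summand from the accumulated $L$-sum which was already dominated by the top term sitting above it. Since $\psi$ takes values in the well-founded set $\textrm{Ord}\times\set{0,1}$ and strictly decreases on every move, no play of $\grnk{G,k}$ can be infinite.
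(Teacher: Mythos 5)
The paper states this as a bare Fact with no proof (only a cut remark hints at an auxiliary game $\mathcal{P}(G,k)$ in a longer version), so there is nothing to compare against; judged on its own, your argument has a genuine gap at the forward move. The recursive clause $\rho(\pi)=\omega^{|V_{G'}|}(n+1)\oplus\bigoplus_{B\in L}\rho(B)$ interacts badly with the rule that a forward move stores the \emph{entire} current $L$ inside each newly created element: writing $S=\bigoplus_{B\in L}\rho(B)$, the move $(G',Thief,L,n)\to(G_i,Cops,G_i^{\succ}\times(Cops,L,n)\cup L,n)$ produces a position whose measure contains $|G_i^{\succ}|+1$ Hessenberg copies of $S$, against a single copy before. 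Your supporting invariant $\bigoplus_{B\in L}\rho(B)<\omega^{|V_{G'}|}$ is simply false: the elements of $L$ record successor components of \emph{earlier, larger} ambient graphs, which may have many more vertices than the component $G'$ currently being played in. Concretely, take $\scc{G}=\set{A,B}$ with $A\prec B$, $|V_A|=4$, $|V_B|=100$, where $A\setminus a_4$ splits into two non-trivial components $A_1\prec A_2$ with $1$ and $2$ vertices. After Thief moves to $A$ (saving $\beta=(B,Cops,\emptyset,k)$) and Cops delete $a_4$, the position has $\rho=\omega^{3}k\oplus\omega^{100}(k+1)$; Thief's forward move to $A_1$ saves $(A_2,Cops,\set{\beta},k-1)$, giving $\rho=\omega^{100}(2k+2)\oplus\omega^{2}k\oplus\omega k$, which is strictly \emph{larger}. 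So $\psi$ is not a decreasing measure and the well-foundedness conclusion does not follow.

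The statement itself is true and your general strategy can be repaired, but the weight of a saved position must not recurse into its own $L$-component. For instance, set $g(p,n,P)=2(k+1)p+2n+\epsilon_P$ with $\epsilon_{Thief}=1$, $\epsilon_{Cops}=0$, and take $\Psi(G',P,L,n)=\omega^{g(|V_{G'}|,n,P)}\oplus\bigoplus_{(H,Cops,L',m)\in L}\omega^{g(|V_H|,m,Cops)}$. Then a Cops move strictly lowers the leading exponent; a forward move into a proper sub-component replaces $\omega^{g(|V_{G'}|,n,Thief)}$ by finitely many terms whose exponents involve strictly smaller vertex counts (and in the strongly connected case only $\epsilon_P$ drops), with no duplication of the old sum; and a come-back to $B\in L$ yields a strict sub-sum of $\Psi$, provided you establish the invariant that $L_B\subseteq L\setminus\set{B}$ for every $B$ in the current $L$ (true at the start, preserved by forward moves since new elements carry the current $L$, and preserved by come-backs since the invariant holds hereditarily). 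That invariant, which your write-up does not isolate, is the actual content needed to make the come-back case go through.
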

\begin{lemma}\label{Game2forRank:Lemma}
Let $G$ be  a digraph, then Thief has a winning strategy in $\grnk{G,k}$ if and only if he  has a winning strategy in $\Grnk{G,k}$. Therefore $\rnk{G}=\Grnk{G}=\grnk{G}$.
\end{lemma}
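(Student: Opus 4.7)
The plan is to establish both implications separately and then combine with Proposition~\ref{Game1forRank:Lemma} to obtain $\rnk{G}=\Grnk{G}=\grnk{G}$. For the direction that a winning Thief strategy in $\Grnk{G,k}$ yields one in $\grnk{G,k}$, observe that the come-back option is additional freedom for Thief that he need not exercise. Using determinacy of the finite game (Fact~\ref{fact:finite}), pick a \emph{uniform} Thief strategy $\tau$ that wins from every Thief-winning $\Grnk$-position. In $\grnk{G,k}$, Thief plays $\tau$ on forward moves, and the projected main component $(G',P,n)$ mimics a $\Grnk$-play won by $\tau$; any forced come-back returns Thief to a saved position which, by uniformity of $\tau$, is still Thief-winning and to which $\tau$ continues to apply.

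For the converse, I argue the contrapositive: a winning Cops strategy $\sigma$ in $\Grnk{G,k}$ lifts to a winning Cops strategy $\sigma'$ in $\grnk{G,k}$ defined by applying $\sigma$ to the current main position, ignoring $L$. The key invariant maintained throughout any $\sigma'$-conforming $\grnk$-play is that (a) the current main position $(G',P,n)$ is Cops-winning in $\Grnk{G',n}$, and (b) every saved quadruple $(H,Cops,L_B,n_B)\in L$ is Cops-winning in $\Grnk{H,n_B}$. The invariant holds initially because $L=\emptyset$ and Cops win $\Grnk{G,k}$ by assumption. It is preserved by Thief forward moves, since Cops winning $\Grnk{G',n}$ means Cops win from $(G_j,Cops,n)$ for \emph{every} $G_j \in \scc{G'}$, covering both the chosen $G_i$ and each $G_j\in G_i^{\succ}$ added to $L$; by Cops moves prescribed by $\sigma$, which preserves (a) by definition of a winning strategy; and by come-back moves, directly from (b) applied to the chosen $B\in L$.

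By Fact~\ref{fact:finite} the play terminates in a position still satisfying the invariant, and then (a) rules out the Thief-winning configuration $\scc{G'}\neq\emptyset$ with $n=0$, so Cops win in $\grnk{G,k}$. The chain $\rnk{G}=\Grnk{G}=\grnk{G}$ then follows from Proposition~\ref{Game1forRank:Lemma}. The main obstacle is the preservation of the invariant at Thief forward moves: it relies crucially on the fact that in $\Grnk$ Thief's SCC-choice is universally quantified against Cops' strategy, so every sibling SCC of the current subgraph — in particular each $G_j \in G_i^{\succ}$ newly added to $L$ — inherits Cops-winnability from the parent position. A minor subtlety in the first direction is the accumulation of $L$ along forward play, which is absorbed by the uniformity of $\tau$ across all Thief-winning positions.
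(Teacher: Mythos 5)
Your proof is correct, and for the harder direction it takes a genuinely different route from the paper. The paper's argument is entirely on the Thief side: the easy direction is, as in your proposal, that Thief may simply ignore the come-back option; for the converse the paper normalizes a Thief winning strategy in $\grnk{G,k}$ by observing that any position $B$ reached by a come-back move could already have been chosen at the earlier forward move where $B$ was stored, so come-back moves can be eliminated and the residual strategy is a strategy in $\Grnk{G,k}$. You instead argue the contrapositive on the Cops side, lifting a Cops winning strategy of $\Grnk{G,k}$ to $\grnk{G,k}$ via the invariant that the current main position and all stored quadruples are Cops-winning. Your route costs an appeal to determinacy (harmless, since both games are finite and Zermelo applies; determinacy is in any case needed to pass from the statement about Thief's strategies to the equality of the minima $\Grnk{G}=\grnk{G}$, which the paper glosses over), and it buys an explicit, checkable invariant where the paper's ``he was able to move early'' is left as a one-line assertion whose details (consistency of the counter $n$ and of the stored list at $B$ after reordering) are never spelled out. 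Two small repairs to your write-up: come-back moves are never \emph{forced} on Thief, so the corresponding clause in your first direction is vacuous; and at a come-back to $B=(H,Cops,L_B,n_B)$ the new list becomes $L_B$, so your invariant (b) should be stated transitively (every quadruple stored inside a stored quadruple is also Cops-winning) --- this follows at once because (b) held at the moment $B$ was pushed and Cops-winnability of a $\Grnk$-position does not change over time.
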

\begin{proof}
First,  if  Thief has a winning strategy  in $\Grnk{G,k}$  then he  also has  a winning strategy  in $\grnk{G,k}$  i.e. the latter being without using come back moves.\\
Second, if Thief has a winning strategy in $\grnk{G,k}$  which  uses a come back move of the form $(G',Thief,L,n)\to B$ then he was  able to move early to $B$. 
\cutout{ This would be more easy to see if we consider the game $\mathcal{P}(G,k)$ described in the proof of Fact \ref{fact:finite}. A winning strategy for Thief in $\mathcal{P}(G,k)$ that contains back edges would be transformed into a winning strategy for him that does not contain back edges. }  
\end{proof}

\section{The variable hierarchy}\label{Vari-Hierarchy-Section}
In order to compute the minimum number of bound (i.e. fixed point) variables needed in a $\mu$-term up to $\alpha$-conversion, a digraph measure is required, that is the \emph{entanglement}. \longversion{The variable hierarchy problem was introduced by Immerman and Poizat \cite{Immerman} in the context of  back and forth games $\homl A, B   \homr$.  The aim was of deciding whether  the logical formulae of fixed number of   variables are able or not to distinguish the two structures $A$ and $B$.     The idea is that each player has got a fixed number $p$  of tokens and during the play he mark the position with a token in such a way the number of tokens on both sides is equal. During the game each player can indeed replace the same token. The outcome of the play is as the standard back and forth game apart that only the positions which are marked by tokens are considered.  The existence of a winning strategy for the prover with $p$ tokens ensures that the two structures are models of  formulae of at most $p$ variables. \\
Later,  the variable hierarchy problem was asked for the propositional modal $\mu$-calculus \cite{BerwangerGraLen06} in order to answer the open question whether Parikh Game Logic \cite{Parikh} is a strict subset of modal $\mu$-calculus.  Parikh question was answered affirmatively, Game logic is less expressive than $\mu$-calculus, in \cite{BerwangerLen05} as a consequence of two facts: \emph{(i)} Game Logic is embeddable in the two variable fragment of modal $\mu$-calculus, and \emph{(ii)}  the hierarchy of modal $\mu$-calculus -- made up according to the number of fixed point variables in $\mu$-formulae -- does not collapse. \\
Another consequence of Berwanger's et. all results is the formalization of Immermann's and Poizat's  token games. In other words  Berwanger's et. all  precised that the fact that the minimum number of variables  roughly needed in a $\mu$-formula transfers into a complexity measure  on the underlying digraph of the formula. This complexity measure  is known as the \emph{entanglement}  and it turns out to be the main tool used in analyzing the variables hierarchy problem in $\mu$-calculi. Roughly speaking,  the entanglement of (underlying digraph of) a $\mu$-term,   is the combinatorial  part  of the variable hierarchy. A major consequence is that the comparison between the star height hierarchy and the variable hierarchy transfers, under some assumptions,  into a comparison of the their related combinatorial parts, i.e. between the rank and the entanglement. }
   The entanglement of a finite digraph $G$, denoted $\Ent{G}$, was
defined in \cite{berwanger} by means of some games $\Ent{G,k}$, $k =
0,\ldots ,\card{V_{G}}$. The game $\Ent{G,k}$ is played on
$G$ by Thief against Cops, a team of $k$ cops as follows.
 Initially all the cops are placed outside the digraph, Thief
selects and occupies an initial vertex of $G$.  After Thief's move,
Cops may do nothing, may place a cop from outside the digraph onto the
vertex currently occupied by Thief, may move a cop already on the
graph to the current vertex.  In turn Thief must choose an edge
outgoing from the current vertex whose target is not already occupied
by some cop and move there.  If no such edge exists, then Thief is
caught and Cops win.  Thief wins if he is never caught.  The
entanglement of $G$ is the least $k \in N$ such that $k$ cops have a
strategy to catch the thief on $G$. It will be useful to formalize
these notions.

\begin{definition}\label{entang:def1}
\emph{
  The entanglement game $\Ent{G,k}$ of a digraph $G$ is defined by:
\begin{itemize}
  \item Its positions are of the form $(v,C,P)$, where $v \in V_{G}$,
    $C \subseteq V_{G}$ and $\card{C} \leq k$, $\monespace{2mm} P \in \{Cops,
    Thief\}$.
 \item  Initially Thief chooses $v_{0} \in V_G$ and moves to
    $(v_0,\emptyset,Cops)$. 
 \item Cops can move  from $(v,C,Cops)$ to $(v,C',Thief)$
    where $C'$ can be
     \begin{itemize}
     \item   $C$ : Cops skip,
       \item $C \cup\set{v}$ : Cops add a new Cop on the
      current position,
     \item  $(C \setminus\set{x}) \cup \set{v}$ : Cops move a placed Cop
      to the current position.
\end{itemize}
 \item Thief can move from $(v,C,Thief)$ to $(v',C,Cops)$ if
    $(v,v') \in E_{G}$ and $v' \notin C$.
\end{itemize}
  Every finite play is a win for Cops, and every infinite play is a win
  for Thief. 
}
\end{definition}

\emph{$\Ent{G}$, the entanglement of $G$, is the minimum $k \in \set{
    0,\ldots ,\card{V_{G}}}$ such that Cops have a winning strategy in
  $\Ent{G,k}$}.  \\
Observe that  an undirected path has entanglement at most $2$. The following Proposition, see \cite[\S 3.4]{mathese}, provides a   useful variant of  entanglement games.
\begin{proposition} \label{modif:entag}
 Let $\ET{G,k}$ be the game played as the game $\Ent{G,k}$ apart
  that Cops are  allowed to retire a number of cops placed on the
  digraph. That is, Cops moves are of the form
\begin{itemize}
  \item  $(g,C,Cops) \rightarrow (g,C',Thief)$     (generalized skip move),
  \item  $(g,C,Cops) \rightarrow (g,C'\cup \set{g},Thief)$     (generalized replace move), 
\end{itemize} 
 where in both cases $C' \subseteq C$.
  Then Cops have  a winning strategy in $\Ent{G,k}$ if and only if they 
  have a winning strategy in $\ET{G,k}$.
\end{proposition}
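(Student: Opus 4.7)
The plan is to prove the two implications separately. For $(\Rightarrow)$, I would observe that every Cops move in $\Ent{G,k}$ is a special case of a Cops move in $\ET{G,k}$: the skip corresponds to a generalized skip with $C'=C$; the add corresponds to a generalized replace with $C'=\emptyset$; and the replace corresponds to a generalized replace with $C'=C\setminus\{x\}$. Since Thief's rules coincide in the two games, any winning Cops strategy in $\Ent{G,k}$ is already a winning Cops strategy in $\ET{G,k}$ and produces identical plays.

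For $(\Leftarrow)$, which is the substantive direction, the plan is a simulation argument. Starting from a winning Cops strategy $\sigma$ in $\ET{G,k}$, I would build a strategy $\sigma'$ for Cops in $\Ent{G,k}$ that maintains, in parallel, a virtual $\ET{G,k}$-play in which Cops follow $\sigma$. Denoting by $\tilde{C}$ the cops in the virtual position and by $C$ those in the real one, the two plays always share the same Thief vertex $v$, and I would maintain the invariant $\tilde{C} \subseteq C$. This invariant makes Thief moves transparent, since a real move $v \to v'$ with $v' \notin C$ also satisfies $v' \notin \tilde{C}$, hence is legal in the virtual game. Once the invariant is preserved throughout, any infinite play of $\sigma'$ in $\Ent{G,k}$ would lift to an infinite virtual play in which Cops follow $\sigma$, contradicting that $\sigma$ wins, so $\sigma'$ must be winning.

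The delicate part, and the step I expect to be the main obstacle, is translating the Cops moves prescribed by $\sigma$ into Cops moves in the real game while preserving the invariant. A virtual generalized skip $\tilde{C}' \subseteq \tilde{C}$ is mirrored by a real skip, and $\tilde{C}' \subseteq \tilde{C} \subseteq C$ still holds. A virtual generalized replace $\tilde{C}' = D \cup \{v\}$ with $D \subseteq \tilde{C}$ splits into cases: if $v \in C$, a real skip works; if $v \notin C$ and $|C| < k$, Cops add a cop on $v$; and if $v \notin C$ and $|C| = k$, Cops must replace some $x \in C$ by $v$. In this last case I need $x \notin D$, and I would use that $v \notin C$ forces $v \notin D$ (since $D \subseteq \tilde{C} \subseteq C$), so $|\tilde{C}'| = |D|+1 \le k$ yields $|D| \le k-1 < |C|$; hence $C \setminus D$ is nonempty, and any $x$ chosen from it does the job.
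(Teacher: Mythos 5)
The paper does not actually prove this proposition; it is stated without proof and deferred to \cite[\S 3.4]{mathese}, so there is no in-paper argument to compare against. Your proposal is a correct, self-contained proof, and it is the natural one: the direction from $\Ent{G,k}$ to $\ET{G,k}$ is immediate because every Cops move of $\Ent{G,k}$ is available in $\ET{G,k}$, and the converse is a simulation maintaining the invariant $\tilde{C}\subseteq C$ between the virtual $\ET{G,k}$-play and the real $\Ent{G,k}$-play. The invariant is oriented the right way (it makes every real Thief move legal in the virtual play, so an infinite real play would yield an infinite play against the winning $\ET{G,k}$-strategy), and your case analysis for mirroring a generalized replace $\tilde{C}'=D\cup\set{v}$ is complete; in particular the cardinality argument $\card{D}\le k-1<\card{C}$ guaranteeing some $x\in C\setminus D$ to evict is exactly what is needed, and it also covers the degenerate situations correctly. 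One small slip in the easy direction: the add move $C\mapsto C\cup\set{v}$ of $\Ent{G,k}$ is the generalized replace with $C'=C$ (giving $C\cup\set{g}$), not with $C'=\emptyset$ (which would give $\set{g}$ and retire all other cops); the claim you are making there is still true, but the instantiation should be corrected.
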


\subsection{An ad  hoc variant of entanglement games}
  The game theoretic definition of the entanglement, Definition  \ref{entang:def1} and even the variant given in Proposition \ref{modif:entag},  refers to some rules which are not very close to the rules of the rank  games with come back  which characterize the rank, Definition  \ref{Rank2-Game-Def}.  And hence we can not  establish the relation between the rank and the entanglement in an easy way. Therefore  we shall give an equivalent variant of entanglement games, denoted $\Entv{G,k}$, with  the property that its rules are close to  those of  the rank games. \\
 First we explain informally the new features of this game w.r.t the games for entanglement. In the game $\Entv{G,k}$ Cops are allowed to  skip, add a cop, replace a cop, retire a number of cops,   and moreover we would like that  they can  put a cop on a  vertex  situated anywhere in the digraph. However, the latter move is not allowed by entanglement rules. In order to make  it possible, Cops should keep in  reserve a set $Vir$ of \emph{virtual}  cops for this purpose: whenever Cops decide to put a cop on an arbitrary vertex $w$  then they should reserve a cop for this purpose and  this cop can not be used until  Thief visits vertex $w$. And at this moment,   the virtual cop must  be placed on $w$.   
\begin{definition}
\emph{
The game $\Entv{G,k}$ is defined as the entanglement game $\ET{G,k}$ apart that  its positions are of the form $(v,C,Vir,P)$ where  $Vir \subseteq V_G$ and $|C \cup Vir|\le k$. Besides the old Cops' moves\footnote{ Which are the skip, the add and the generalized replace.}, the latter act on  the set $C$, Cops can  move   from $(v,C,Vir,Cops)$ to $(v,C',Vir',Thief)$  such that:     
\begin{itemize}
\item  if $v \in Vir$ then Cops must update $C'=C\cup \set{v}$ and $Vir' =Vir \setminus \set{v}$,
\item if $v \notin Vir$  then Cops may update $Vir'=(Vir \setminus A)\cup \set{w} $ where   $w \in V_G$ and  $A\subseteq Vir$. 
\end{itemize}
}
\end{definition}

\begin{lemma}\label{Entag:Vir:Lemma}
Let $G$ be a digraph.   Cops have a winning strategy in $\Entv{G,k}$  if and only if they have a winning strategy in $\ET{G,k}$.
\end{lemma}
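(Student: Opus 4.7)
The plan is to prove the two implications separately, noting that the easy direction is essentially immediate while the converse requires a simulation argument.

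For the direction $\ET{G,k} \Rightarrow \Entv{G,k}$, I would take a winning strategy for Cops in $\ET{G,k}$ and lift it to $\Entv{G,k}$ by playing the same moves on $C$ and keeping $Vir = \emptyset$ throughout. The budget constraint $|C \cup Vir| \le k$ collapses to $|C| \le k$, Thief's legality condition $v' \notin C$ is the same in both games, and no vertex ever lies in $Vir$, so the forced promotion rule is never triggered. The two plays coincide and Cops win.

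For the converse, I would start from a winning strategy $\sigma$ for Cops in $\Entv{G,k}$ and build a strategy $\tau$ for $\ET{G,k}$ that maintains in its memory the full state $(v, C, Vir, P)$ of a fictitious $\Entv{G,k}$ play. The invariant to preserve is that the real cop set $C_{ET}$ in the $\ET$ play coincides with the $C$ component of the simulated state, while $Vir$ lives only in memory. When Thief moves in $\ET{G,k}$, the same move is replayed in the internal game; it is legal because the Thief constraint depends only on $C$. The strategy $\sigma$ then prescribes a Cops move in the internal game, consisting of a $C$-update (a skip, add, generalized replace, or the forced $C' = C \cup \{v'\}$ when $v' \in Vir$) and a $Vir$-update; $\tau$ executes the $C$-update in $\ET{G,k}$ and records the $Vir$-update internally. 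The budget is respected because $|C_{ET}| = |C| \le |C \cup Vir| \le k$. Finiteness of every $\sigma$-play implies that the simulated Thief eventually has no legal move; since $C_{ET} = C$, neither does the real Thief.

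The main obstacle I anticipate is justifying that reserving a virtual cop at a far-away vertex $w$ — an action unavailable in $\ET{G,k}$ — does not grant $\sigma$ strictly more power than $\tau$ can faithfully translate. The resolution is the key observation that a virtual cop neither blocks Thief's moves nor modifies the real cop set $C$ until Thief actually visits $w$; at the moment of that visit, the reservation becomes an actual placement on the current vertex, which is precisely the ``add a cop on the current position'' move already allowed in $\ET{G,k}$. Thus the virtual cops are pure bookkeeping, absorbed at no extra cost into Cops' memory, and the simulation goes through.
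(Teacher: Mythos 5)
Your proposal is correct and follows essentially the same route as the paper: the easy direction keeps $Vir=\emptyset$, and the converse direction simulates the $\Entv{G,k}$ strategy in $\ET{G,k}$ by matching the real cop set with the $C$-component, replaying Thief's moves (legal since virtual cops do not block Thief), and executing only the $C$-update of each Cops move while treating $Vir$ as bookkeeping. Your write-up is in fact somewhat more explicit than the paper's about the budget bound $|C|\le|C\cup Vir|\le k$ and about why the forced promotion at a visited virtual vertex is just the ordinary ``add a cop'' move.
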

\begin{proof}
First,  a winning strategy   for Cops  in $\ET{G,k}$ is still winning for them  in $\Entv{G,k}$, the latter does not refer to  virtual cops. In This case every position $(v,C,Vir,P)$ in $\Entv{G,k}$ is matched with   the position  $(v,C,P)$ in $\Ent{G,k}$ where $Vir=\emptyset$. \\
Second,  a  Cops' winning strategy  in $\Entv{G,k}$ is mapped to a Cops' winning strategy in $\ET{G,k}$ as follows. Every position $(v,C,P)$ of $\ET{G,k}$ is matched with the position $(v,C,Vir,P)$ of $\Entv{G,k}$.\\
A Thief's move from $v$ to $v'$ in $\ET{G,k}$ is simulated by the same move from $v$ to $v'$ in $\Entv{G,k}$. Indeed this simulation  is possible  because Thief is allowed  in $\Entv{G,k}$ to cross  a vertex which is occupied by a virtual cop.\\
Assume that the position $(v,C,Cops)$ is matched with the position $(v,C,Vir,Cops)$ and consider a Cop's move $M=(v,C,Vir,Cops) \to (v,C',Vir',Thief)$ in $\Entv{G,k}$. The move $M$ is simulated in $\ET{G,k}$   by the move  $(v,C,Cops)\to (v,C',Thief)$. 
\end{proof}

\section{The star height hierarchy  vs.  the variable hierarchy}
We are ready to  state the main result of this paper. 
\begin{theorem}\label{Entag:star}
Let $G$ be a digraph. The entanglement of $G$ is lower or equal to the rank of $G$.
\end{theorem}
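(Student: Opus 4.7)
The plan is to prove, by induction on $|V_G|$, that Cops have a winning strategy in $\Entv{G,\rnk{G}}$, from which $\Ent{G}\le\rnk{G}$ follows via Lemma \ref{Entag:Vir:Lemma} together with Proposition \ref{modif:entag}. The induction will follow the recursive definition of the rank; at each inductive step I will first invoke Lemma \ref{Entag:Vir:Lemma} to replace the hypothesis strategy by one in the simpler game $\ET{\cdot,\cdot}$, so that Cops' cops always lie on Thief-visited vertices.

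If $\scc{G}=\emptyset$, then $G$ is acyclic and every play of $\Entv{G,0}$ is finite, so Cops win with no cops. If $G$ is strongly connected, pick a witness $v_0$ with $\rnk{G\setminus v_0}=\rnk{G}-1$ and let $\sigma'$ be, by the induction hypothesis and Lemma \ref{Entag:Vir:Lemma}, a winning Cops strategy in $\ET{G\setminus v_0,\rnk{G}-1}$. Cops reserve a virtual cop for $v_0$ and simulate $\sigma'$ on $G\setminus v_0$, lifting each response via the generalized replace move. As long as Thief stays in $G\setminus v_0$ the simulation is faithful. The first (and only) time Thief moves to $v_0$, the rule "$v\in Vir$" forces the virtual cop to become a real permanent cop; Thief is then driven out to some $w\in G\setminus v_0$, and Cops perform a single generalized replace that retires every $\sigma'$-cop and installs $\sigma'$'s fresh initial response at $w$. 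Because the real cop on $v_0$ forbids any return, the remainder is an uninterrupted play of $\sigma'$ in $G\setminus v_0$. The cop count never exceeds $1+(\rnk{G}-1)=\rnk{G}$.

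Otherwise $\rnk{G}=\max_{G_i\in\scc{G}}\rnk{G_i}$, and by the same mechanism each non-trivial SCC $G_i$ admits a winning Cops strategy $\sigma'_i$ in $\ET{G_i,\rnk{G_i}}$, using at most $\rnk{G}$ cops. Cops do nothing while Thief wanders outside the non-trivial SCCs; whenever Thief first enters some $G_i$, Cops switch to $\sigma'_i$ restricted to $G_i$. By maximality of each SCC Thief cannot return once he has left, so Cops may safely retire every cop and wait for the next entry. Since $\prec$ is well-founded on $\scc{G}$ and each $\sigma'_i$ catches Thief inside its SCC, every play is finite.

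The main delicate point is the bookkeeping of the strongly connected case: one has to check that the "restart" of $\sigma'$ after Thief's single excursion through $v_0$ is actually a legal sequence of Cops moves in $\Entv{G,\rnk{G}}$ and that the cop count stays bounded by $\rnk{G}$. This comes down to the generalized replace move $(g,C,Cops)\to (g,C'\cup\set{g},Thief)$ with $C'\subseteq C$ being exactly what is needed to simultaneously retire the old cops and install the first cop of $\sigma'$'s fresh response at $w$; working in $\ET$ for the sub-game guarantees that $\sigma'$'s cops always lie on Thief-visited vertices, which is what makes this lifting go through.
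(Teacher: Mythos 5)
Your proof is correct, but it takes a genuinely different route from the paper. The paper first sets up two auxiliary games -- the rank game with come-back moves $\grnk{G,k}$ (Lemma \ref{Game2forRank:Lemma}) and the virtual-cop entanglement game $\Entv{G,k}$ (Lemma \ref{Entag:Vir:Lemma}) -- and then gives a move-by-move simulation translating a Cops winning strategy in $\grnk{G,k}$ into one in $\Entv{G,k}$: Thief's entanglement moves become forward or come-back moves in the rank game depending on whether he stays in the current component, and the rank-game Cops' deletion of a vertex $w$ becomes a real cop if $w$ is Thief's current position and a virtual cop otherwise. You instead argue by structural induction on the recursive definition of $\rnk{G}$, bypassing $\grnk{G,k}$ entirely: the $+1$ in the strongly connected clause is realized as one cop parked on the witness vertex $v_0$, the sub-strategy for $G\setminus v_0$ is restarted after Thief's unique excursion through $v_0$, and the $\max$ clause is handled by the well-foundedness of $\prec$ on $\scc{G}$. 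Your approach is more self-contained and makes the correspondence between the rank recursion and the cop budget transparent; it also reveals that the virtual-cop machinery is dispensable here, since the cop destined for $v_0$ can simply be added the first time Thief arrives there, so one could work in $\ET{G,k}$ directly and invoke only Proposition \ref{modif:entag}. What the paper's simulation buys is a single uniform translation between the two game characterizations rather than a case analysis, at the price of the extra game definitions. Two small points you should make explicit: the case where Thief's initial vertex is $v_0$ itself (Cops then place a real cop immediately), and, in the multi-component case, that $\sigma'_i$ either catches Thief or forces him to leave $G_i$ through an edge exiting the component -- in the latter event he cannot return by maximality, which is what makes the play descend in $\prec$ and hence terminate.
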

\begin{proof}
To prove that $\Ent{G}\le \rnk{G}$ it is enough to prove $\Entv{G}\le \grnk{G}$. Because Lemma \ref{Entag:Vir:Lemma} shows that $\Ent{G}=\Entv{G}$ and Lemma \ref{Game2forRank:Lemma} shows that $\rnk{G}=\grnk{G}$.  
 Let $k=\Entv{G}$,  we shall construct a winning strategy for Cops in the  game $\Entv{G,k}$  out of a  Cops' winning strategy in $\grnk{G,k}$.\\
Every position  of the form  $(v,C,Vir,P)$   in  $\Entv{G,k}$  is matched with a position of the form  $(G',P,L,n)$ in $\grnk{G,k}$ such that if  the strongly connected component of  the subgraph $G\setminus (C\cup Vir)$   which contains $v$, denoted by $scc(v)$,   is not trivial then  $scc(v)=G'$.

 Let us consider a  Thief's move in $\Entv{G,k}$  of the form $M=(v,C,Vir,Thief)\to (w,C,Vir,Cops).$
If $scc(w)$ is trivial, then Cops just skip in $\grnk{G,k}$. Observe that either Thief will reach a vertex without  successors (where he loses), or he enters a non trivial strongly connected component. Otherwise i.e.  $scc(w)$ is not trivial, the move $M$ is simulated in $\grnk{G,k}$   according to $w$: 
\begin{itemize}
\item if  $w \in V_{G'}$  then $M$  is simulated  by the move $(G',Thief,L,n)\to (G'',Cops,L,n)$ where  $G''$ is the strongly connected component of $G'$ containing $w$,
\item if  $w \notin V_{G'}$  then the  move $M$  is simulated by the come back move $(G',Thief,L,n) \to (G^{-},Cops,L',m)$  where   $(G^{-},Cops,L',m) \in L$ and  $w \in V_{G^{-}}$.
\end{itemize}
A Cops' move of the form $N=(G',Cops,L,n) \to (G'\setminus w,Thief,L,n-1)$
  in $\grnk{G,k}$ is simulated in $\Entv{G,k}$ according to the nature of the position $(G',Cops,L,n)$.
\begin{itemize}
\item  if the  position $(G'',Thief,L'',n)$ that precedes $(G',Cops,L,n)$  was in the same strongly connected component i.e. $G' \subset G''$ (note that $L=L''$), then the move $N$ is simulated either by    $(v,C,Vir, Cops) \to (v,C\cup \set{v},Vir, Thief)$ if $v=w$, or by    $(v,C,Vir, Cops) \to (v,C, Vir\cup \set{w}, Thief)$ otherwise.
\item  if the  position $(G',Cops,L,n)$  comes from a come back move $(G'',Thief,L'',m) \to (G',Cops,L,n)$ then $N$ is simulated by 
$(v,C,Vir,Cops) \to (v,C\setminus (C \cap V_{G^{-}}),Vir \setminus (Vir \cap V_{G^{-}}),Thief).$\\ Let us define $\minus{G}$.     There exists just one position $\minus{\gamma}$ such that (i) $\minus{\gamma}$ has  the same predecessor of the position $(G',Cops,L,n)$  in the game $\grnk{G,k}$ and (ii) the position $(G'',Thief,L'',m)$ has been reached from the position $\minus{\gamma}$. We define $\minus{G}$ to be the digraph associated to $\minus{\gamma}$, i.e. $\minus{\gamma}$ is of the form $(\minus{G},Cops,\minus{L},\minus{n})$.\end{itemize}
\end{proof}
Finally, besides Theorem  \ref{Entag:star}, we give further conditions under which the non collapse of the variable hierarchy implies the non collapse of the star height hierarchy of a given  $\mu$-calculus $\Lmu$. \\
 The proof schema of the strictness of the variable hierarchy  consists  essentially  in the   construction of \emph{hard} $\mu$-terms of arbitrary entanglement \cite{BerwangerGraLen06,LPAR08}. A $\mu$-term $t$ is said to be hard (w.r.t entanglement) if for every $\mu$-term $t'$ which is equivalent to $t$ we have that the entanglement of $t$ (viewed as a digraph) is lower or equal  to the  entanglement of $t'$ up to a constant.    To argue that the star height of $\Lmu$ is also infinite, it suffices to construct hard $\mu$-terms  of  arbitrary entanglement such that  the entanglement of each $\mu$-term  equals its   rank.  It follows from Theorem \ref{Entag:star} that such  $\mu$-terms  are   also hard w.r.t the star height. \\ 
\textbf{Acknowledgment} We acknowledge helpful discussions with Luigi Santocanale on the topic.

\bibliographystyle{plain}
\bibliography{biblio}

\end{document}